\renewcommand{\Im}{\operatorname{Im}}
\renewcommand{\Re}{\operatorname{Re}}
\newtheorem{theorem}{Theorem}[section]
\newtheorem{definition}[theorem]{Definition}
\newtheorem{remark}[theorem]{Remark}
\newtheorem{lemma}[theorem]{Lemma}
\newtheorem{example}[theorem]{Example}
\numberwithin{equation}{section}
\begin{document}
\title[KdV equation]{Singular Miura type initial profiles for the KdV
equation}
\author{Sergei Grudsky and Alexei Rybkin}
\address{Sergei M. Grudsky, Departamento de Matematicas, CINVESTAV del
I.P.N. Aportado Postal 14-740, 07000 Mexico, D.F., Mexico}
\email{grudsky@math.cinvestav.mx}
\address{Alexei Rybkin, Department of Mathematics and Statistics, University
of Alaska Fairbanks, PO Box 756660, Fairbanks, AK 99775}
\email{ arybkin@alaska.edu}
\date{August, 2011}
\thanks{Based on research supported in part by the NSF under grant DMS
1009673.}
\subjclass{34B20,37K15, 47B35}
\keywords{KdV equation, Titchmarsh-Weyl $m$-function, Hankel operators,
Miura transformation.}

\begin{abstract}
We show that the KdV flow evolves any real singular initial profile $q$ of
the form $q=r^{\prime }+r^{2}$, where $r\in L_{loc}^{2}$, $r|_{\mathbb{R}%
_{+}}=0$ into a meromorphic function with no real poles.
\end{abstract}

\maketitle



\section{Introduction}

This note is closely related to the recent paper \cite{KapPerryTopalov2005}
by Kappeler et al and \cite{Ryprep} by one of the authors.

More specifically, we are concerned with well-posedness (WP) of the initial
value problem (IVP) for the Korteweg-De Vries (KdV) equation ($x\in\mathbb{R}%
,t\ge0$) 
\begin{equation}  \label{eq1.1}
\begin{cases}
\partial_tu-6u\partial_xu+\partial_x^3u=0 \\ 
u(x,0)=q(x)%
\end{cases}%
\end{equation}
with certain low regularity non-decaying initial profiles $q$.

The problem of WP of \eqref{eq1.1} was raised back in the late 60's at about
the same time as the inverse scattering formalism for \eqref{eq1.1} was
discovered and has drawn an enormous attention. We are not in a position to
go over the extensive literature on the subject and refer to the book \cite%
{Tao06} by Tao where further literature is given.

The problem, of course, gets more difficult once we impose less regularity
on the initial data in \eqref{eq1.1}. Delta function type $q$'s in %
\eqref{eq1.1} were rigorously treated by Kappeler in \cite{Kap1986}. In the
present century, a large amount of effort has been put into WP in the
Sobolev space $H^{-s}$ with negative index\footnote{$H^{-s}$ is the space of
distributions subject to $(1+\left\vert x\right\vert )^{-s}\widehat{f}(x)\in
L^{2}$.}. The sharpest result is $s=3/4$ and due to Guo \cite{Guo09} which,
in turn, sharpens the result by Colliander et al \cite{ColKeStaTao03}. The
space $H^{-3/4}$ includes such singular functions as $\delta $, $1/x,$ etc.
However, the harmonic analytical methods employed in above papers break down
on $s=-1$. On the other hand, the Schr\"{o}dinger operator 
\begin{equation*}
L_{q}=-\partial _{x}^{2}+q(x)
\end{equation*}%
in the Lax pair associated with \eqref{eq1.1} is well-defined for $q\in
H^{-1}$ (see e.g. \cite{SS1999}) suggesting that the global WP could be
pushed to $H^{-1}$. It is exactly how Kappeler-Topalov \cite{KapTop06} were
able to extend WP to $H^{-1}(\mathbb{T})$ for periodic $q$'s. It is natural
to conjecture that the global WP for \eqref{eq1.1} also holds and could be
achieved by a suitable extension of the inverse scattering transform (IST)
method for $L_{q}$ with $q\in H^{-1}$. An important step in this direction
was done by Kappeler et al \cite{KapPerryTopalov2005} where it was shown
that \eqref{eq1.1} is globally well-posed in a certain sense if $q=r^{\prime
}+r^{2}$ with some $r\in L^{2}$. The transform 
\begin{equation*}
B(r)=r^{\prime }+r^{2}\quad ,\quad r\in L_{loc}^{2}
\end{equation*}%
is called Miura. Of course, $B(L^{2})$ doesn't exhaust $H^{-1}$ as $H^{-1}$
consists of all functions $f=r^{\prime }+p$ with some $r,p\in L^{2}$. It is
easy to see that $L_{q}$ is a positive operator for any $q=r^{\prime }+r^{2}$
(we call such $q$ a Miura potential) meaning that \eqref{eq1.1} admits
so-called `dispersive' solutions, i.e. solutions which disperse in time and
do not have a soliton component. However, since $H^{-1}\supset H^{-3/4}$,
singularity of such solutions is pushed all the way to\footnote{%
As indicated in \cite{SS1999}, $L_{q}$ with $q\in H^{-s}$ for $s>1$ is
ill-defined.} $s=-1$.

We note that all functions in $H^{-s}(\mathbb{R})$ exhibit certain decay at $%
\pm\infty$. On the other hand, there has been a significant interest in
non-decaying solutions to \eqref{eq1.1} (other than periodic). The case of
the so-called steplike initial profiles (i.e. when $q(x)\to0$ sufficiently
fast as $x\to+\infty$ ($-\infty$) and $q(x)$ doesn't decay at $-\infty$ ($%
+\infty$)). is of physical interest and has attracted much attention since
the early 70s. We refer to the recent paper \cite{EGT09} by
Egorova-Grunert-Teschl for a comprehensive account of the (rigorous)
literature on steplike initial profiles with specified behavior at infinity
(e.g. $q$'s tending to a constant, periodic function, etc.). In the recent
preprint of one of the authors \cite{Ryprep} (see also \cite{Ryb10}), the
case of $q$'s rapidly decaying at $+\infty$ and sufficiently arbitrary at $%
-\infty$ is studied in great detail. Initial steplike profiles in these
papers are at least locally integrable (i.e. regular).

The current note is concerned with treating Miura steplike initial data $q$
in \eqref{eq1.1}. Namely, we consider $q=r^{\prime }+r^{2}$ for $r\in
L_{loc}^{2}$ identically (for simplicity) vanishing on $(0,\infty )$. Even
though $q$ has very low regularity and is essentially arbitrary on $(-\infty
,0)$ the fact that $q$ is zero on $(0,\infty )$ leads to an extremely strong
smoothing effect. Dispersion instantaneously turns such initial profiles $%
q(x)$ into a function $u(x,t)$ meromorphic in $x$ on the whole complex plane
for any $t>0$. The WP of the problem \eqref{eq1.1} can therefore be
understood in a classical sense and moreover it comes with an explicit
formula 
\begin{equation*}
u(x,t)=-2\partial _{x}^{2}\log \det \left( 1+\mathbb{H}_{x,t}\right) ,
\end{equation*}%
where $\mathbb{H}_{x,t}$ is the Hankel operator with symbol 
\begin{equation*}
\varphi _{x,t}(\lambda )=\frac{i\lambda -m(\lambda ^{2})}{i\lambda
+m(\lambda ^{2})}e^{2i\lambda (4\lambda ^{2}t+x)},\quad \lambda \in \mathbb{R%
}\;,\;x\in \mathbb{R}\;,\;t>0,
\end{equation*}%
where $m$ is the Titchmarsh-Weyl $m$-function associated with $L_{q}$ on $%
(-\infty ,0)$ with a Dirichlet boundary condition at $0$.

The WP of our problem, among others, means that $u(x,t)$ has no real poles
for any $t>0$. I.e. no positon solution may occur in our situation.

Our approach is based on a suitable adaptation of the IST and analysis of
Hankel operators with oscillatory symbols. To keep our note as short as
possible, we will omit some technical issues and come back to them elsewhere
in a more suitable setting.

The paper is organized as follows. In Section 2 we review Hankel operators
and prove a new result related to a Hankel operator with a cubic oscillatory
symbol. In Section 3 we discuss the Titchmarsh-Weyl $m$-function and
reflection coefficient in the context of singular points. In the last
Section 4 we state and prove our main result.


\section{Hankel operators}

Hankel operators naturally appear in linear algebra, operator theory,
complex analysis, mathematical physics, and many other areas. In our note
they play a crucial role. However, their formal definitions vary. In the
context of integral operators, a Hankel operator is usually defined as an
integral operator on $L^2_+:=L^2(\mathbb{R}_+)$ whose kernel depends on the
sum of the arguments. I.e. 
\begin{equation}  \label{eqHO1.1}
\left(\mathbb{H}f\right)(x)= \int_0^\infty H(x+y)f(y)dy \quad,\quad x\ge0
\;,\; f\in L^2_+\;,
\end{equation}
with some function $H$.

In many situations, including ours, $H$ is not a function but rather a
distribution. It is convenient then to accept a regularized version of %
\eqref{eqHO1.1}.

Let\footnote{%
For brevity we set $\int:=\int_{-\infty}^\infty$} 
\begin{equation*}
\left(\mathcal{F}f\right)(\lambda)= \frac{1}{\sqrt{2\pi}}\int e^{i\lambda
x}f(x)dx
\end{equation*}
be the Fourier transform and $\chi$ the Heaviside function of $\mathbb{R}%
_+:=(0,\infty)$.

\begin{definition}
\label{defHO1} Given $\varphi\in L^\infty$, we call the operator $\mathbb{H}%
_\varphi$ on $L^2_+$ defined for any $f\in L^2_+$ by 
\begin{equation}  \label{eqHO2.1}
\mathbb{H}_\varphi f= \chi\mathcal{F}\varphi\mathcal{F}f
\end{equation}
the Hankel operator on $L^2_+$ with symbol $\varphi$.
\end{definition}

It follows from a straightforward computation that \eqref{eqHO1.1} and %
\eqref{eqHO2.1} agree if $\varphi\in L^2\cap L^\infty$ and $H=\mathcal{F}%
\varphi$. However if $\varphi$ is merely $L^\infty$ then $\mathcal{F}\varphi$
is not a function but a (tempered) distribution. The operator $\mathbb{H}$
given by \eqref{eqHO1.1} is no longer well-defined. But the one given by %
\eqref{eqHO2.1} is.

The Hankel operator $\mathbb{H}_\varphi$ is clearly bounded from %
\eqref{eqHO2.1}. One immediately has 
\begin{equation}  \label{eqHO2.2}
\left\Vert \mathbb{H}_\varphi \right\Vert \le \left\Vert \varphi
\right\Vert_\infty.
\end{equation}

Membership of $\mathbb{H}_\varphi$ in narrower Schatten-Von Neumann ideals
is, however, a much more subtle issue which was completely resolved by
Peller in about 1980 (see e.g. \cite{Peller2003}).

We will be particularly concerned with the invertibility of $1+\mathbb{H}%
_\varphi$. The first fact is trivial.

\begin{lemma}
\label{lem1}Let $\varphi$ be such that $\left\vert \varphi(\lambda)
\right\vert\le1$ a.e. $\lambda\in\mathbb{R}$ and $\left\vert
\varphi(\lambda) \right\vert<1$ a.e. on a set $S$ of positive Lebesgue
measure. Then $-1$ is not an eignevalue of $\mathbb{H}_\varphi$.
\end{lemma}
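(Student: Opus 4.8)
The plan is to argue by contradiction: suppose $-1$ is an eigenvalue of $\mathbb{H}_\varphi$, so that there is a nonzero $f\in L^2_+$ with $\mathbb{H}_\varphi f=-f$, i.e. $\chi\mathcal{F}\varphi\mathcal{F}f=-f$. First I would take the inner product of this identity with $f$ in $L^2_+$ and estimate the absolute value of $\langle \mathbb{H}_\varphi f,f\rangle$. Using Parseval (the Fourier transform $\mathcal F$ is unitary on $L^2$) together with the fact that the multiplication operator by $\chi$ is an orthogonal projection, one gets
\[
\norm{f}^2 = \abs{\inprod{\mathbb{H}_\varphi f,f}} = \abs{\inprod{\chi\,\varphi\,\mathcal{F}f,\mathcal{F}f}} \le \int \abs{\varphi(\lambda)}\,\abs{(\mathcal{F}f)(\lambda)}^2\,d\lambda .
\]
Here I have used $\langle \chi\varphi\mathcal Ff,\mathcal Ff\rangle=\langle\varphi\mathcal Ff,\chi\mathcal Ff\rangle$ only if care is taken — actually it is cleaner to write $\mathbb{H}_\varphi f = \chi\mathcal F(\varphi\mathcal F f)$ and pair against $f=\mathcal F^{-1}(\mathcal F f)$, but since $f\in L^2_+$ we have $\chi f=f$, so $\langle\chi\mathcal F(\varphi\mathcal Ff),\mathcal F f\rangle_{L^2} = \langle \varphi\mathcal Ff,\chi\mathcal Ff\rangle$ and $\chi\mathcal Ff$ need not equal $\mathcal Ff$; the correct route is to note $f\in L^2_+$ implies $\langle\chi g,f\rangle=\langle g,f\rangle$ for any $g\in L^2$, applied with $g=\mathcal F(\varphi\mathcal Ff)$, giving $\norm{f}^2=\abs{\inprod{\varphi\mathcal Ff,\mathcal Ff}}\le\int\abs{\varphi}\abs{\mathcal Ff}^2$.

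Now since $\abs{\varphi}\le 1$ a.e., the last integral is at most $\norm{\mathcal Ff}^2=\norm{f}^2$, so equality must hold throughout; in particular
\[
\int \bigl(1-\abs{\varphi(\lambda)}\bigr)\,\abs{(\mathcal{F}f)(\lambda)}^2\,d\lambda = 0 .
\]
The integrand is nonnegative, hence it vanishes a.e.; therefore $(\mathcal{F}f)(\lambda)=0$ for a.e. $\lambda$ at which $\abs{\varphi(\lambda)}<1$, which by hypothesis includes a.e. $\lambda\in S$. Thus $\mathcal{F}f$ vanishes on a set of positive measure. But $f\in L^2_+$ means $\mathcal{F}f$ lies in the Hardy space $H^2$ of the upper half-plane (boundary values of a function holomorphic in $\Im\lambda>0$), and a nonzero $H^2$ function cannot vanish on a set of positive Lebesgue measure on $\mathbb{R}$ — this is the Luzin–Privalov / F. and M. Riesz uniqueness theorem. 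Hence $\mathcal{F}f=0$, so $f=0$, contradicting $f\neq 0$.

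The only genuinely non-elementary ingredient — the step I would flag as the ``hard part'' — is the last one: the fact that a nonzero boundary-value function of an $H^2(\mathbb{C}_+)$ function cannot vanish on a positive-measure subset of $\mathbb{R}$. Everything else (unitarity of $\mathcal F$, the projection property of multiplication by $\chi$, and the pointwise bound $\abs\varphi\le1$ forcing equality in the Cauchy–Schwarz-type estimate) is routine. I would state the Hardy-space uniqueness fact as a known classical result with a reference (e.g. to a standard text on Hardy spaces), rather than prove it, and note that it is exactly the property that makes the Hankel structure — the cutoff $\chi$ placing $f$ in $L^2_+$ and hence $\mathcal F f$ in $H^2$ — do the work here.
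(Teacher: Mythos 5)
Your overall strategy is the same as the paper's: argue by contradiction, estimate the quadratic form of $\mathbb{H}_\varphi$ at the putative eigenvector, use $\abs{\varphi}<1$ on $S$ to force strictness, and invoke the $H^2$ uniqueness theorem to rule out $\widehat{f}:=\mathcal{F}f$ vanishing on a set of positive measure. (You are in fact more explicit than the paper on the last point: the paper's strict inequality silently relies on exactly the Hardy-space uniqueness you cite.) However, there is a genuine error in your central identity. The correct computation is
\[
\inprod{\mathbb{H}_\varphi f,f}=\inprod{\chi\mathcal{F}(\varphi\widehat{f}),f}
=\inprod{\mathcal{F}(\varphi\widehat{f}),f}
=\inprod{\varphi\widehat{f},\mathcal{F}^{-1}f}
=\int\varphi(\lambda)\,\widehat{f}(\lambda)\,\widehat{\conj{f}}(\lambda)\,d\lambda ,
\]
because the adjoint of $\mathcal{F}$ is $\mathcal{F}^{-1}$, not $\mathcal{F}$, and $\overline{(\mathcal{F}^{-1}f)(\lambda)}=\widehat{\conj{f}}(\lambda)=\overline{\widehat{f}(-\lambda)}$. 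A Hankel operator is symmetric for the bilinear pairing (its kernel depends on $x+y$) but is not self-adjoint, and its quadratic form is \emph{not} $\int\varphi\abs{\widehat{f}}^2$. So your inequality $\norm{f}^2\le\int\abs{\varphi}\abs{\widehat{f}}^2$ and the resulting identity $\int(1-\abs{\varphi})\abs{\widehat{f}}^2=0$ are not justified as written.

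The gap is repairable without changing your architecture. Normalizing $\norm{f}=1$, the correct identity gives
\[
1=\abs{\inprod{\mathbb{H}_\varphi f,f}}
\le\int\abs{\varphi(\lambda)}\,\abs{\widehat{f}(\lambda)}\,\abs{\widehat{f}(-\lambda)}\,d\lambda
\le\norm{\varphi\widehat{f}}_{L^2}\,\norm{\widehat{f}}_{L^2}
\le 1,
\]
so equality holds throughout; in particular $\norm{\varphi\widehat{f}}_{L^2}=\norm{\widehat{f}}_{L^2}$, i.e. $\int(1-\abs{\varphi}^2)\abs{\widehat{f}}^2=0$, which forces $\abs{\varphi}=1$ a.e. on $\{\widehat{f}\neq0\}$ and hence $\widehat{f}=0$ a.e. on $S$. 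From there your final step applies verbatim: a nonzero $H^2_+$ function cannot vanish on a set of positive Lebesgue measure, so $f=0$, a contradiction. In short, the idea and the decisive ingredient are right, but the key pairing must be fixed before the equality-case analysis is valid.
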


\begin{proof}
Assume $-1$ is an eigenvalue of $\mathbb{H}_\varphi$ and $f\ne0$ is the
corresponding normalized eigenvector (i.e. $\left\Vert f
\right\Vert_{L^2_+}=1$).

It follows from 
\begin{equation*}
f+\mathbb{H}_\varphi f=0
\end{equation*}
that 
\begin{equation*}
1+\int \varphi(\lambda)\widehat{f}(\lambda)\widehat{\overline{f}}%
(\lambda)d\lambda=0
\end{equation*}
and hence 
\begin{equation}  \label{eqHO3.1}
1+\Re \int\varphi(\lambda)\widehat{f}(\lambda)\widehat{\overline{f}}%
(\lambda)d\lambda=0.
\end{equation}

But 
\begin{align}
\Re \int \varphi (\lambda )\widehat{f}(\lambda )d\lambda & \leq \int
\left\vert \varphi (\lambda )\widehat{f}(\lambda )\right\vert \cdot
\left\vert \widehat{\overline{f}}(\lambda )\right\vert d\lambda   \notag \\
& \leq \left\Vert \varphi \widehat{f}\right\Vert _{L^{2}}\cdot \left\Vert 
\widehat{\overline{f}}\right\Vert _{L^{2}}=\left\Vert \varphi \widehat{f}%
\right\Vert _{L^{2}}  \notag \\
& \leq \int_{S}\left\vert \varphi (\lambda )\right\vert ^{2}\cdot \left\vert 
\widehat{f}(\lambda )\right\vert ^{2}d\lambda   \notag \\
& <\int_{S}\left\vert \widehat{f}(\lambda )\right\vert ^{2}\leq 1.
\label{eqHO3.2}
\end{align}

Comparing \eqref{eqHO3.1} and \eqref{eqHO3.2} leads to a contradiction.
\end{proof}

The proof of Lemma \ref{lem1} is no longer valid if $\left\vert
\varphi(\lambda) \right\vert=1$ for a.e. real $\lambda$.

However in our setting symbols $\varphi$ have a very specific structure 
\begin{equation}  \label{eqHO4.3}
\varphi(\lambda)= e^{i\lambda(\lambda^2+a)}I(\lambda)
\end{equation}
where $a$ is a real number and $I$ is an inner function of the upper half
plane (i.e.\footnote{$H^p_\pm$ ($0<p\le\infty$) are standard Hardy spaces of
the upper (lower) half planes $\mathbb{C}^\pm$.} $I\in H_+^\infty$ and $%
\left\vert I(\lambda) \right\vert=1$ a.e. $\lambda\in\mathbb{R}$).

\begin{lemma}
\label{lem2} Let $\varphi$ be given by \eqref{eqHO4.3}. Then

\begin{enumerate}
\item \label{it1} $\mathbb{H}_\varphi$ is a compact operator,

\item \label{it2} $1+\mathbb{H}_\varphi$ is invertible.
\end{enumerate}
\end{lemma}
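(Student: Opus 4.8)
The plan is to prove the two claims in sequence, exploiting the factorization $\varphi(\lambda)=e^{i\lambda(\lambda^2+a)}I(\lambda)$ with $I$ inner.

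For compactness of $\mathbb{H}_\varphi$, the main idea is to use Peller's characterization: a Hankel operator $\mathbb{H}_\varphi$ is compact if and only if its symbol (modulo $H^\infty_-$) lies in the space $\operatorname{VMO}$. Since adding a function in $H^\infty_-$ does not change the Hankel operator, it suffices to show that $\varphi$ agrees, modulo $H^\infty_-$, with a function in $\operatorname{VMO}$; in fact I expect that $\varphi$ itself, or a harmless modification of it, is already in $\operatorname{VMO}(\mathbb{R})$. The cubic phase $e^{i\lambda^3}$ oscillates slowly near the origin (its derivative $3\lambda^2$ vanishes at $0$) and rapidly at infinity, so the oscillation is controlled on small scales; the inner factor $I$ is bounded and the symbol is continuous away from the possible discontinuities of $I$. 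I would make this precise by approximating $\varphi$ in the $\operatorname{BMO}$ norm by smooth compactly supported functions after truncating the high-frequency oscillation — or, more cleanly, invoke the result of Section 2 advertised in the introduction on Hankel operators with cubic oscillatory symbols, which presumably gives compactness directly. The delicate point here is uniform control of the mean oscillation of $e^{i\lambda^3}$ on intervals of all sizes, which is where the specific cubic structure matters; a purely $L^\infty$ bound like \eqref{eqHO2.2} is not enough.

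Granting compactness, item~(2) follows by a Fredholm argument. Since $\mathbb{H}_\varphi$ is compact, $1+\mathbb{H}_\varphi$ is Fredholm of index $0$, so it is invertible if and only if it is injective, i.e. if and only if $-1$ is not an eigenvalue. Now $\abs{\varphi(\lambda)}=\abs{e^{i\lambda(\lambda^2+a)}}\cdot\abs{I(\lambda)}=1$ a.e. on $\mathbb{R}$, so Lemma~\ref{lem1} does not apply directly — this is exactly the borderline case flagged after its proof. The way around this is to revisit the proof of Lemma~\ref{lem1}: equality throughout \eqref{eqHO3.2} forces $\varphi(\lambda)\widehat f(\lambda)$ and $\widehat{\overline f}(\lambda)$ to be positively proportional a.e., and combining this with $f+\mathbb{H}_\varphi f=0$ rewritten in the form $\widehat f = -\chi\,\mathcal F(\varphi\,\overline{\widehat{\overline f}})$ — wait, more carefully: the eigenvalue equation says $\widehat f$ is the Fourier transform of an $L^2_+$ function while simultaneously $\varphi\,\widehat{\overline f}$ must have its ``minus'' Fourier part cancel $f$. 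The key structural input is that $e^{i\lambda(\lambda^2+a)}$ is \emph{not} a unimodular function whose Hankel operator has $-1$ as an eigenvalue: if $(1+\mathbb{H}_\varphi)f=0$ then, tracking analyticity, $\widehat f$ and $\overline{\widehat f}(-\,\cdot\,)$ would have to satisfy a relation incompatible with the entire, non-constant phase $e^{i\lambda^3}$ unless $f=0$. Concretely, from $f=-\mathbb{H}_\varphi f$ one gets that $g:=\mathcal F^{-1}\big((1-\chi)\varphi\mathcal F f\big)$ lies in $H^2_-$, while $\varphi\widehat f=\widehat f + \widehat g$ with $\widehat f\in H^2_+$; applying $\overline{I}$ and examining the now-rational-times-exponential structure of $e^{-i\lambda(\lambda^2+a)}$ times an $H^2$ boundary value leads, via a Paley--Wiener / analytic-continuation argument, to $f\equiv 0$.

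I expect the main obstacle to be precisely the injectivity step for (2), since the symbol is unimodular and the soft contradiction of Lemma~\ref{lem1} is unavailable; one genuinely has to use that the phase is a non-constant entire function of cubic growth (so $e^{i\lambda(\lambda^2+a)}$ has an essential singularity at $\infty$ and is not in $H^\infty_+$ or $H^\infty_-$). A clean way to organize this is: assume $(1+\mathbb{H}_\varphi)f=0$, deduce $I\,e^{i\lambda(\lambda^2+a)}\widehat f \in \widehat f + H^2_-$, hence $e^{i\lambda(\lambda^2+a)}\,\widehat f\,\overline I$ differs from $\widehat f$ by something in $H^2_-$; then use that $\widehat f\in H^2_+$ and that multiplication by the cubic exponential maps $H^2_+$ out of any half-plane Hardy space unless the function is zero, forcing $\widehat f=0$. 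The compactness step, by contrast, I view as essentially a citation of Peller's theorem once the $\operatorname{VMO}$ membership of the cubic-exponential-times-inner symbol is checked, the latter being the ``technical issue'' the authors say they postpone, so I would keep that part brief and lean on the Section~2 result.
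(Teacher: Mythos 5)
Your proposal has genuine gaps in both parts, and in both cases it misses the single structural fact that the paper's proof runs on: the factorization $e^{i\lambda(\lambda^{2}+a)}=B(\lambda)U(\lambda)$, where $B$ is an infinite Blaschke product with zeros accumulating at infinity and $U$ is unimodular and continuous on $\overline{\mathbb{R}}$. For compactness, your oscillation heuristic is not just vague but points the wrong way: $e^{i\lambda^{3}}$ is \emph{not} in $\operatorname{VMO}$ (on an interval of fixed small length near $\lambda=N$ the phase varies by about $3N^{2}\delta$, so the mean oscillation is bounded below as $N\to\infty$), and any criterion based on the \emph{size} of the oscillation is blind to the sign of the phase. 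That sign is decisive: $e^{i\lambda(\lambda^{2}+a)}I=(BI)U$ lies in $H_{+}^{\infty}+C(\overline{\mathbb{R}})$ and Hartman's theorem gives compactness, whereas the conjugate symbol $e^{-i\lambda(\lambda^{2}+a)}=\overline{B}\,\overline{U}$ produces a non-compact Hankel operator ($\mathbb{H}_{\overline{B}}$ acts isometrically on the infinite-dimensional model space $H_{+}^{2}\ominus BH_{+}^{2}$). So the compactness really is a statement about the analytic direction of the oscillation, not its magnitude, and "approximating in BMO after truncating the high-frequency oscillation" cannot work. Deferring to "the Section 2 result on cubic oscillatory symbols" is also circular, since Lemma \ref{lem2} \emph{is} that result. (A small convention slip: with the paper's definition $\mathbb{H}_{\varphi}\cong JP_{-}\varphi$ on $H_{+}^{2}$, the symbol is determined modulo $H_{+}^{\infty}$, not $H_{-}^{\infty}$.)

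For invertibility, your reduction via Fredholm index to "$-1$ is not an eigenvalue" is legitimate in principle, but the injectivity step is where all the content lies and your sketch does not deliver it: unimodular symbols can perfectly well have $-1$ as an eigenvalue of their Hankel operator, so some quantitative input from the cubic phase is indispensable, and the Paley--Wiener/analytic-continuation argument is only gestured at. The paper avoids the eigenvalue question entirely: the same factorization implies left-invertibility of the Toeplitz operator $\mathbb{T}_{\varphi}$, and the Devinatz--Widom theorem then yields $f\in H_{+}^{\infty}$ with $\left\Vert \varphi-f\right\Vert _{L^{\infty}}<1$; since $\mathbb{H}_{\varphi}=\mathbb{H}_{\varphi-f}$, inequality \eqref{eqHO2.2} gives $\left\Vert \mathbb{H}_{\varphi}\right\Vert<1$ and $1+\mathbb{H}_{\varphi}$ is invertible by a Neumann series --- no compactness and no eigenvalue analysis needed. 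If you want to salvage your route, you would still have to import the Blaschke-product factorization (or an equivalent quantitative statement about the distance from $\varphi$ to $H_{+}^{\infty}$), at which point the paper's shorter argument is available anyway.
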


\begin{proof}
Our argument is based upon the factorization (see \cite{BGS2001}, \cite%
{DyGrud02} Section 5.10, \cite{Gru01} ) 
\begin{equation}
e^{i\lambda (\lambda ^{2}+a)}=B(\lambda )U(\lambda )\quad ,\quad \lambda \in 
\mathbb{R},  \label{eqHO5.1}
\end{equation}%
where $B(\lambda )$ is a Blaschke product with infinitely many zeros
accumulating at infinity and $U$ is a unimodular function from $C(\overline{%
\mathbb{R}})$, the class of continuous on $\mathbb{R}$ functions $f$ subject
to 
\begin{equation*}
\lim_{\lambda \rightarrow -\infty }f(\lambda )=\lim_{\lambda \rightarrow
\infty }f(\lambda )\neq \pm \infty .
\end{equation*}

Since a product of an inner function and a $C(\overline{\mathbb{R}})$%
-function is in the algebra $H_{+}^{\infty }+C(\overline{\mathbb{R}})$, by
the Hartman theorem \cite{Nik2002} $\mathbb{H}_{\varphi }$ is compact and %
\eqref{it1} is proven.

Consider the Hankel operator \eqref{eqHO2.1} in the Fourier representation.
Denoting $P_{\pm }$ the Riesz projection in $L^{2}$ onto $H_{\pm }^{2}$, we
have 
\begin{align*}
\mathcal{F}\mathbb{H}_{\varphi }\mathcal{F}^{-1}& =\mathcal{F}\chi \mathcal{F%
}\varphi \mathcal{F}\mathcal{F}^{-1} \\
& =P_{+}\mathcal{F}\mathcal{F}\varphi =P_{+}\mathcal{F}^{2}\varphi  \\
& =P_{+}J\varphi =JP_{-}\varphi 
\end{align*}%
where $Jf(x)=f(-x)$. Thus, the operator 
\begin{equation}
JP_{-}\varphi :H_{+}^{2}\rightarrow H_{+}^{2}  \label{eqHO7.1}
\end{equation}%
is unitarily equivalent to $\mathbb{H}_{\varphi }$. Let $\mathbb{T}_{\varphi
}$ be the Toeplitz operator on $H_{+}^{2}$. I.e. 
\begin{equation*}
\mathbb{T}_{\varphi }f=P_{+}\varphi f,\quad f\in H_{+}^{2}.
\end{equation*}

Note (\cite{BotSil06} Ch. 2) that (\ref{eqHO5.1}) implies left-invertibility
of the operator $\mathbb{T}_{\varphi }$ and, by the Devinatz-Widom theorem (%
\cite{BotSil06} p. 59), there exists a function $f\in H_{+}^{\infty }$, such
that 
\begin{equation*}
\left\Vert \varphi -f\right\Vert _{L^{\infty }}<1.
\end{equation*}%
Thus, it immediately follows from the representation (\ref{eqHO7.1}) that 
\begin{equation*}
\mathbb{H}_{\varphi }=\mathbb{H}_{\varphi -f}
\end{equation*}%
and hence (\ref{eqHO2.2}) and (\ref{eqHO5.1}) 
\begin{equation*}
\left\Vert H_{\varphi }\right\Vert =\left\Vert \mathbb{H}_{\varphi
-f}\right\Vert \leq \left\Vert \varphi -f\right\Vert _{L^{\infty }}<1.
\end{equation*}

This proves \eqref{it2} and the lemma is proven.
\end{proof}


\section{The Titchmarsh-Weyl $m$-function and the reflection coefficient}

Denote $H^{-1}_{loc}:=H^{-1}_{loc}(\mathbb{R})$ the local $H^{-1}$ space
(i.e. the set of all functions $\widetilde{\chi}_Sf$, where $f\in H^{-1}$
and $\widetilde{\chi}_S$ is a smoothened characteristic function of a
compact set $\mathbb{R}$). It is well-known that any $q \in H^{-1}_{loc}(%
\mathbb{R})$ can be represented as $q=Q^{\prime }$ with some $Q\in L^2_{loc}$
and we rewrite 
\begin{equation*}
-y^{\prime \prime }+qy=z y
\end{equation*}
as 
\begin{equation}  \label{eq3.0}
-(y^{\prime }-Qy)^{\prime }-Qy^{\prime }=z y
\end{equation}
(the regularized Schr\"{o}dinger equation).

Following the approach of \cite{SS1999} we introduce 
\begin{equation}  \label{eq3.1}
\begin{cases}
L_q := -\partial_x(\partial_x-Q)-Q\partial_x \\ 
\partial_x Q=q%
\end{cases}%
\end{equation}
the Schr\"{o}dinger operator with a (singular) potential $q\in H^{-1}_{loc}(%
\mathbb{R})$.

As proven in \cite{SS1999}, the operator \eqref{eq3.1} is well-defined. One
can also extend the classical Titchmarsh-Weyl theory to $L_q$. In
particular, the Weyl limit point/circle classification can be easily
extended to singular $q$'s. We plan to provide the details elsewhere and
only mention here that regular derivatives $\partial_x$ in classical
Titchmarsh-Weyl theory should, where appropriate, be replaced by ``quasi"
derivative $\partial_x-Q$. Note that this doesn't change the Wronskian as 
\begin{equation*}
\det 
\begin{pmatrix}
y_1 & y_2 \\ 
y_1^{\prime }-Qy_1 & y_2^{\prime }-Qy_2%
\end{pmatrix}%
= \det 
\begin{pmatrix}
y_1 & y_2 \\ 
y_1^{\prime } & y_2^{\prime }%
\end{pmatrix}%
\end{equation*}
if $y_1,y_2$ are a.c.\footnote{%
a.c. abbreviates absolutely continuous.}

Let's now define the (Dirichlet) Titchmarsh-Weyl $m$-function corresponding
to $\mathbb{R}_-$. Assuming that $q=Q^{\prime }$, with some $Q\in L^2_{loc}(%
\mathbb{R})$ and $L_q$ is limit point case at $-\infty$ and $Q|_{\mathbb{R}%
_+}=0$.

Denoting $\psi(x,z)$ the Weyl solution (i.e. $\psi\in L^2(\mathbb{R}_-)$ for
any $z\in\mathbb{C}^+$ of \eqref{eq3.0} with $Q\in L^2_{loc}$ and $Q|_{%
\mathbb{R}_+}=0$) we define the Titchmarsh-Weyl $m$-function as 
\begin{equation}  \label{eq4.1}
m(z)= - \frac{\partial_x\psi(+0,z)}{\psi(+0,z)}.
\end{equation}
Note that 
$\partial_x\psi(x,z)$ is not a.c. for $x\ge0$ (whereas $\partial_x\psi-Q\psi$
is), but $\partial_x\psi(x,z)=\partial_x\psi(x,z)-Q(x)\psi(x,z)$ for $x>0$
and $\partial_x\psi(+0,z)$ are well-defined. As its regular counterpart, the
Titchmarsh-Weyl $m$-function has the following properties:

Properties of $m$. 

\begin{enumerate}
\item \label{pr1} $m$ is analytic and Herglotz. I.e. $m:\mathbb{C}^+\to%
\mathbb{C}^+$.

\item \label{pr2} Let $Q_n$ be a sequence of smooth $L^2_{loc}$ functions
such that $\left\Vert Q-Q_n \right\Vert_{L^2_{loc}}\to0$, $n\to\infty$, and $%
q=Q^{\prime }$ is limit point case at $-\infty$. Then ${m}_n\to m$ uniformly
on compact subsets of $\mathbb{C}^+$. 
\end{enumerate}

Define now the reflection coefficient $R$ from the right incident of a
singular potential $q\in H^{-1}_{loc}(\mathbb{R})$ such that $q|_{\mathbb{R}%
_+}=0$.

Pick up a point $x_0>0$ and consider a solution to $L_qy=\lambda^2y$ which
is proportional to the Weyl solution on $(-\infty,x_0)$ and is equal to $%
e^{-i\lambda x}+re^{i\lambda x}$ on $(x_0,\infty)$. From the continuity of
this solution and its derivative at $x_0$ one has 
\begin{equation*}
r(\lambda,x_0)= e^{-2i\lambda x_0} \frac{i\lambda-\frac{\psi^{\prime
}(x_0,\lambda^2)}{\psi(x_0,\lambda^2)}}{i\lambda+\frac{\psi^{\prime
}(x_0,\lambda^2)}{\psi(x_0,\lambda^2)}}.
\end{equation*}

We define the right reflection coefficient by 
\begin{equation}  \label{eq3.3}
R(\lambda)=\lim_{x_0\to0^+}r(\lambda,x_0)= \frac{i\lambda-m(\lambda^2)}{%
i\lambda+m(\lambda^2)}.
\end{equation}

\begin{example}
\label{ex1} Let $q(x)=c \delta(x)$. The Weyl solution corresponding to $%
-\infty$ can be explicitly computed by ($C\ne0$) 
\begin{equation*}
\psi(x,\lambda^2)= C 
\begin{cases}
e^{-i\lambda x} \quad & ,\quad x<0 \\ 
\frac{1}{2i\lambda} \left( ce^{i\lambda x}+(2i\lambda-c)e^{-i\lambda
x}\right) \quad & , \quad x>0%
\end{cases}%
\end{equation*}
and hence by \eqref{eq4.1} and \eqref{eq3.3} 
\begin{align*}
m(\lambda^2) &= i\lambda-c, \\
R(\lambda) &= \frac{c}{2i\lambda-c}.
\end{align*}
\end{example}

\section{Main result}

In the last section we state and prove our main result. As customary, given
self-adjoint operator $A$ we write $A\ge0$ if $A$ is positive.

\begin{theorem}
\label{thm1} Let $q$ in \eqref{eq1.1} supported on $\mathbb{R}_{-}$ be in $%
H^{-1}$ and such that the Schr\"{o}dinger operator $L_{q}\geq 0$. Then there
is a (unique) classical solution to \eqref{eq1.1} given by 
\begin{equation}
u(x,t)=-2\partial _{x}^{2}\log \det \left( 1+\mathbb{H}_{x,t}\right) 
\label{eq2.1}
\end{equation}%
where $\mathbb{H}_{x,t}$ is the trace class Hankel operator on $L^{2}(%
\mathbb{R}_{+})$ with the symbol 
\begin{equation*}
\varphi _{x,t}(\lambda )=\frac{i\lambda -m(\lambda ^{2})}{i\lambda
+m(\lambda ^{2})}e^{2i\lambda x+8i\lambda ^{3}t}
\end{equation*}%
where $m$ is the (Dirichlet) Titchmarsh-Weyl $m$-function of $L_{q}$ on $%
L^{2}(\mathbb{R}_{-})$.

The solution $u(x,t)$ is meromorphic in $\mathbb{C}^+$ for any $t>0$ except
(double) poles none of which are real.
\end{theorem}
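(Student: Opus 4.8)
The plan is to run the inverse scattering transform for $L_q$ — which, thanks to $L_q\ge0$, carries no bound states — and then to extract the meromorphy in $x$ from the fact that, for $t>0$, the cubic oscillation $e^{8i\lambda^3t}$ turns the Hankel kernel into an entire function of $x$. I would begin with the reflection coefficient. Since $L_q\ge0$ has no eigenvalues, the scattering data of $q$ reduce to the single right reflection coefficient $R$ of \eqref{eq3.3}, and I would record three facts about it. First, $R$ extends to a function in $H^\infty_+$: the Herglotz function $m$ is analytic on $\mathbb{C}\setminus[0,\infty)$, so $\lambda\mapsto m(\lambda^2)$ is analytic on $\mathbb{C}^+$; $i\lambda+m(\lambda^2)$ has no zero there precisely because $L_q\ge0$, since a zero at $\lambda=i\kappa$, $\kappa>0$, would produce an $L^2(\mathbb{R})$ eigenfunction of $L_q$ at $-\kappa^2$; and boundedness in $\mathbb{C}^+$ follows from $|R|\le1$ on $\mathbb{R}$ together with $m(\lambda^2)\sim i\lambda$ (hence $R(\lambda)\to0$) at infinity, by a Phragm\'en--Lindel\"of argument. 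Second, $|R(\lambda)|\le1$ for real $\lambda$ and $|R(\lambda)|<1$ for a.e.\ $\lambda$: writing the relevant boundary value of $m$ as $\mu+i\nu$ with $\nu\ge0$ gives $|R|^2=\frac{\mu^2+(\lambda-\nu)^2}{\mu^2+(\lambda+\nu)^2}\le1$, with equality only when $\lambda\nu=0$, and $\nu>0$ a.e.\ because the a.c.\ spectrum of $L_q$ on $\mathbb{R}_-$ fills $[0,\infty)$. All of this I would first prove for smooth $Q_n\to Q$ in $L^2_{loc}$ and then pass to the limit using property \eqref{pr2} of $m$.

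Next, the Hankel operator. Fix $x\in\mathbb{R}$, $t>0$, and put $\varphi_{x,t}(\lambda)=R(\lambda)e^{2i\lambda x+8i\lambda^3t}$. The dilation $\lambda\mapsto(8t)^{-1/3}\lambda$ brings $e^{2i\lambda x+8i\lambda^3t}$ to the model form $e^{i\lambda(\lambda^2+a)}$ of \eqref{eqHO4.3}, so by \eqref{eqHO5.1} one can write $e^{2i\lambda x+8i\lambda^3t}=BU$ with $B$ an inner Blaschke product and $U$ unimodular in $C(\overline{\mathbb{R}})$. Then $\varphi_{x,t}=(RB)U$ with $RB\in H^\infty_+$ and $U\in C(\overline{\mathbb{R}})$, so $\varphi_{x,t}\in H^\infty_++C(\overline{\mathbb{R}})$ and Hartman's theorem makes $\mathbb{H}_{x,t}:=\mathbb{H}_{\varphi_{x,t}}$ compact, just as in the proof of Lemma \ref{lem2}. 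Since $|\varphi_{x,t}|=|R|$, which is $\le1$ and $<1$ a.e., Lemma \ref{lem1} shows $-1$ is not an eigenvalue of the compact operator $\mathbb{H}_{x,t}$; hence $1+\mathbb{H}_{x,t}$ is invertible for every real $x$ and every $t>0$. The trace-class property comes from the cubic smoothing: the Hankel kernel is $H_{x,t}(s)=\frac1{\sqrt{2\pi}}\int e^{i\lambda(s+2x)}R(\lambda)e^{8i\lambda^3t}\,d\lambda$, and for $t>0$ one may rotate the contour into $\mathbb{C}^+$ onto the rays $\arg\lambda=\pi/6$ and $\arg\lambda=5\pi/6$, where $e^{8i\lambda^3t}$ decays like $e^{-c|\lambda|^3t}$; since $R$ is bounded on those rays this decay dominates everything and yields a smooth, fast-decaying (in $s+2x$) kernel, so $\mathbb{H}_{x,t}$ is trace class — this is the role of the cubic oscillatory Hankel estimate of Section 2.

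The same contour deformation shows that $H_{x,t}(s)=g(s+2x,t)$ with $g(\xi,t)=\frac1{\sqrt{2\pi}}\int_\Gamma e^{i\lambda\xi}R(\lambda)e^{8i\lambda^3t}\,d\lambda$ entire in $\xi\in\mathbb{C}$, so for each fixed $t>0$ the map $x\mapsto\mathbb{H}_{x,t}$ extends to an entire, trace-class-valued function on $\mathbb{C}$. Hence $D(x):=\det(1+\mathbb{H}_{x,t})$ is entire in $x$, and it is not identically zero because $\mathbb{H}_{x,t}\to0$ in trace norm as $x\to+\infty$ along $\mathbb{R}$, so $D(x)\to1$. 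Therefore $u(x,t)=-2\partial_x^2\log D(x)$ is meromorphic in $x$, in particular on $\mathbb{C}^+$, its only singularities being poles of order exactly $2$ at the zeros of $D$ — if $D\sim c(x-x_0)^k$ then $\partial_x^2\log D\sim-k(x-x_0)^{-2}$. Finally $D(x)\ne0$ for real $x$ by the previous step, so none of the poles is real; this is the asserted absence of positon solutions.

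It remains to identify $u$. That \eqref{eq2.1} solves \eqref{eq1.1} is the (suitably extended) IST: the time law $R(\lambda)\mapsto R(\lambda)e^{8i\lambda^3t}$ is the standard KdV evolution of the reflection coefficient, \eqref{eq2.1} is the associated Hankel-determinant ($\tau$-function) representation, the symmetry $\varphi_{x,t}(-\lambda)=\overline{\varphi_{x,t}(\lambda)}$ makes $u$ real on $\mathbb{R}$, and at $t=0$ one recovers $u(\cdot,0)=q$ because, with no bound states, the map $q\mapsto R$ is injective on the present class of Miura potentials; a solution so constructed is real-analytic in $x$ on $\mathbb{R}$ for $t>0$, and uniqueness among classical solutions follows by matching the uniquely evolving scattering data. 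Here I would follow the authors' stated policy of deferring the more technical parts of this extended IST. The main obstacle is precisely the trace-class bound together with the trace-class-valued analytic continuation in $x$: this is where the cubic oscillatory smoothing must be made quantitative, and it is the genuinely new analytic ingredient; a secondary difficulty is the rigorous extension of the Gelfand--Levitan--Marchenko reconstruction to singular Miura potentials.
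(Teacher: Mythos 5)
Your overall architecture matches the paper's: contour deformation into $\mathbb{C}^+$ exploiting the superexponential decay of $e^{8i\lambda^3t}$ to get a trace class, entire-in-$x$ operator family; then meromorphy of $-2\partial_x^2\log\det(1+\mathbb{H}_{x,t})$ with double poles at the zeros of the determinant; then non-vanishing of the determinant on $\mathbb{R}$ via the spectral analysis of $\mathbb{H}_{x,t}$. (The paper reaches the trace-class and "solves KdV'' statements through smooth approximants $\tilde q=\tilde r'+\tilde r^2$ and property \eqref{pr2} of $m$, whereas you work with $R$ directly and defer the IST identification; that difference is cosmetic given the authors' own policy of omitting those technicalities.)

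There is, however, a genuine gap at the decisive step, the invertibility of $1+\mathbb{H}_{x,t}$ for real $x$. You derive $|R(\lambda)|<1$ a.e.\ from the assertion that ``the a.c.\ spectrum of $L_q$ on $\mathbb{R}_-$ fills $[0,\infty)$,'' i.e.\ that $\nu=\Im m(\lambda^2+i0)>0$ a.e. That assertion is unjustified and is false for the class of potentials considered here: $q$ is essentially arbitrary on $(-\infty,0)$ (the hypotheses are only $q\in H^{-1}$, support in $\mathbb{R}_-$, and $L_q\ge0$), and nothing prevents the half-line operator from having empty absolutely continuous spectrum — this is precisely why the paper splits the argument into two cases. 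When the a.c.\ spectrum is empty, $m(\lambda^2+i0)$ is real a.e., so $|R(\lambda)|=1$ a.e.\ on $\mathbb{R}$ and $R$ is an \emph{inner} function; then Lemma \ref{lem1} gives nothing (its proof explicitly breaks down when $|\varphi|=1$ a.e.), and compactness alone does not exclude $-1$ from the spectrum. This is exactly the situation Lemma \ref{lem2}\eqref{it2} was designed for: using the Blaschke--times--$C(\overline{\mathbb{R}})$ factorization \eqref{eqHO5.1} not just for compactness (as you do) but, via left-invertibility of the Toeplitz operator $\mathbb{T}_{\varphi}$ and the Devinatz--Widom theorem, to produce $f\in H_+^\infty$ with $\|\varphi_{x,t}-f\|_\infty<1$, whence $\|\mathbb{H}_{x,t}\|=\|\mathbb{H}_{\varphi_{x,t}-f}\|<1$. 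You need to add this case; as written, your argument proves the theorem only under the additional (and unverified) hypothesis that $|R|<1$ on a set of positive measure.
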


\begin{proof}
It is proven in \cite{KapPerryTopalov2005} that 
\begin{equation*}
L_{q}\geq 0\quad \Rightarrow \quad q\in B\left( L_{loc}^{2}\right) \subset
H_{loc}^{-1}
\end{equation*}%
where $B(r)=r^{\prime }+r^{2}$ is the Miura map.

Since compactly supported smooth functions are dense in $H^{-1}_{loc}$, we
can approximate our $q$ by a sequence $\tilde{q}=\tilde{r}^{\prime }+\tilde{r%
}^2$ where $\tilde{r}$'s are smooth and compactly supported.

For each $\tilde{q}$ there exists the (classical) right reflection
coefficient $\widetilde{R}$. The (classical) Marchenko operator $\widetilde{%
\mathbb{H}}_{x,t}$ has no discrete component (since $L_{\tilde{q}} \ge0$)
and hence it takes the form 
\begin{equation}  \label{eq7.1}
\left(\widetilde{\mathbb{H}}_{x,t}f\right)(\cdot)= \int_0^\infty \widetilde{H%
}_{x,t}(\cdot+y)f(y)dy
\end{equation}
where 
\begin{equation}  \label{eq7.2}
\widetilde{H}_{x,t}(\cdot)= \frac{1}{2\pi} \int e^{2i\lambda
x+8i\lambda^3t}e^{i\lambda(\cdot)}\widetilde{R}(\lambda)d\lambda.
\end{equation}

The reflection coefficient $\widetilde{R}$ can be computed by 
\begin{equation*}
\widetilde{R}(\lambda)= \frac{i\lambda-\widetilde{m}(\lambda^2)}{i\lambda+%
\widetilde{m}(\lambda^2)}
\end{equation*}
where $\tilde{m}$ is the Titchmarsh-Weyl $m$-function of $L^0_{\tilde{q}}$,
the Dirichlet $-\partial_x^2+\tilde{q}(x)$ on $\mathbb{R}_-$. Since the
function $\widetilde{R}(\lambda)$ is analytic in $\mathbb{C}^+$ and $%
\widetilde{R}(\lambda)=O(1/\lambda) \; , \; \lambda\to\pm\infty$, and $%
\left\vert \widetilde{R}(\lambda) \right\vert\le1\;,\; \lambda\in\mathbb{C}^+
$, one can obviously deform the contour of integration in \eqref{eq7.2} and %
\eqref{eq7.2} reads 
\begin{equation}  \label{eq8.1}
\widetilde{H}_{x,t}(\cdot)= \frac{1}{2\pi} \int_{\Im\lambda=h} e^{2i\lambda
x+8i\lambda^3t}e^{i\lambda(\cdot)}\widetilde{R}(\lambda)d\lambda.
\end{equation}
for any $h>0$. Since the integrand in \eqref{eq8.1} is clearly integrable
along the line $\Im\lambda=h$, the operator $\widetilde{\mathbb{H}}_{x,t}$
is trace class (see \cite{Ryprep}) and the function 
\begin{equation}  \label{eq8.2}
\tilde{u}(x,t)=-2\partial_x^2\log\det\left(1+\widetilde{\mathbb{H}}%
_{x,t}\right)
\end{equation}
is well-defined and solves \eqref{eq1.1} with initial data $\tilde{q}$.

We now pass to the limit in \eqref{eq8.2} as $\tilde{r}\to r$ in $L^2_{loc}$%
. By property \eqref{pr2} of the Titchmarsh-Weyl $m$-function, 
\begin{equation*}
\widetilde{R}(\lambda)= \frac{i\lambda-\widetilde{m}(\lambda^2)}{i\lambda+%
\widetilde{m}(\lambda^2)} \quad \longrightarrow \quad {R}(\lambda)= \frac{%
i\lambda-{m}(\lambda^2)}{i\lambda+{m}(\lambda^2)}
\end{equation*}
on each compact set in $\mathbb{C}^+$. The oscillatory factor $e^{2i\lambda
x+8i\lambda^3t}$ exhibits a superexponential decay on $\Im\lambda=h>0$. This
means that (see \cite{Ryprep} for) 
\begin{equation*}
\widetilde{\mathbb{H}}_{x,t} \; \longrightarrow \; \mathbb{H}_{x,t}
\end{equation*}
for any $x\in\mathbb{R}$, $t>0$ in trace class norm and hence 
\begin{equation*}
\det\left(1+\widetilde{\mathbb{H}}_{x,t}\right) \; \longrightarrow \;
\det\left(1+\mathbb{H}_{x,t}\right).
\end{equation*}

Note that $\widetilde{H}_{x,t}$ and 
\begin{equation*}
{H}_{x,t}(\cdot)= \frac{1}{2\pi} \int_{\Im\lambda=h} e^{2i\lambda
x+8i\lambda^3t}e^{i\lambda(\cdot)}{R}(\lambda)d\lambda
\end{equation*}
are clearly entire with respect to $x$, $\forall\;t>0$. It is quite easy to
see that $\widetilde{\mathbb{H}}_{x,t}, \mathbb{H}_{x,t}$ are
operator-valued functions entire with respect to $x$, $\forall\;t>0$. This
means that the functions 
\begin{equation*}
\tilde{u}(x,t)=-2\partial_x^2\log\det\left(1+\widetilde{\mathbb{H}}%
_{x,t}\right)
\end{equation*}
are meromorphic in $x$ on the whole complex plane for any $t>0$ and converge
to the meromorphic function 
\begin{equation*}
{u}(x,t)=-2\partial_x^2\log\det\left(1+{\mathbb{H}}_{x,t}\right)
\end{equation*}
as $\tilde{r}\to r$ in $L^2_{loc}$.

It remains to show that $\det(1+\mathbb{H}_{x,t})$ doesn't vanish on the
real line for any $t>0$. Since $\mathbb{H}_{x,t}$ is trace class, this
amounts to showing that $-1$ is not an eigenvalue of $\mathbb{H}_{x,t}$ for
all $x\in\mathbb{R}\;,\;t>0$. We have two cases: $L_q$ has some a.c.
spectrum, $L_q$ has no a.c. spectrum. The first case immediately follows
from Lemma \ref{lem1}.

The second case is a bit more involved. If the a.c. spectrum of $L_q$ is
empty then the Titchmarsh-Weyl $m$-function is real a.e. on the real line
and hence the reflection coefficient $\left\vert R(\lambda) \right\vert\le1$
in $\mathbb{C}^+$ and $\left\vert R(\lambda) \right\vert=1$ a.e. on $\mathbb{%
R}$. I.e. $R$ is an inner function of the upper half plane. Lemma \ref{lem2}
then applies.
\end{proof}

\begin{remark}
\label{rem1} Theorem \ref{thm1} implies very strong WP of the KdV equation
with eventually any steplike Miura initial data supported on $(-\infty,0)$.
Each such solution $u(x,t)$ is smooth and hence solves the KdV equation in
the classical sense. It also has a continuity property in the sense that if $%
\{q_n\}$ is a sequence of smooth $H^{-1}_{loc}$ functions convergent in $%
H^{-1}_{loc}$ to $q$ then the sequence of the corresponding solutions $%
\{u_n(x,t)\}$ converges in $H^{-1}_{loc}$ to $u(x,t)$. This, in turn,
implies uniqueness. The initial condition is satisfied in the sense that 
\begin{equation*}
\left\Vert u(\cdot,t)-q \right\Vert_{H^{-1}_{loc}}\;\to\; 0 \quad,\quad
t\to0.
\end{equation*}
\end{remark}

\begin{remark}
\label{rem2}It is unlikely that, under our conditions, $\mathbb{H}_{x,t}$ in %
\eqref{eq2.1} is trace class for any $x$ if $t=0$. We conjecture however
that if $Q$ is uniformly in $L_{loc}^{2}$, i.e. $\sup_{x\leq
0}\int_{x-1}^{x}\left\vert Q\right\vert ^{2}<\infty $, then $\mathbb{H}_{x,0}
$ is also trace class for any real $x$. 
\end{remark}

\begin{remark}
\label{rem3} We assumed $q|_{\mathbb{R}_{+}}=0$ for simplicity and it can be
replaced with a suitable decay condition but the consideration becomes much
more involved due to serious technical circumstances. We plan to return to
it elsewhere.
\end{remark}


\begin{thebibliography}{99}
\bibitem{BGS2001} B\"{o}tcher, A.; Grudsky S.; Spitkovsky I. \emph{Toeplitz
operators with frequency modulated semi-almost periodic symbols}. J. Fourier
Anal. and Appl. 7 (2001), no. 5, 523--35.

\bibitem{BotSil06} B\"{o}tcher, A.; Silbermann B. \emph{Analysis of Toeplitz
operators}. Springer-Verlag, Berlin, 2002. 665 pp.

\bibitem{ColKeStaTao03} Colliander, J.; Keel, M.; Staffilani, G.; Takaoka,
H.; Tao, T. \emph{Sharp global well-posedness for KdV and modified KdV on }$%
R $\emph{\ and }$T$\emph{. }J. Amer. Math. Soc. 16 (2003), no. 3, 705--49.
MR1969209 (2004c:35352)

\bibitem{DyGrud02} Dybin, V.; Grudsky S. \emph{Introduction to the theory of
Toeplitz operators with infinite index}. Birkh\"{a}user Verlag, Basel, 2002.
xii+299 pp.

\bibitem{EGT09} Egorova, I.; Grunert K.; Teschl G. \emph{On the Cauchy
problem for the Korteweg-de Vries equation with steplike finite-gap initial
data I. Schwarz-type perturbations.} Nonlinearity 22 (2009), 1431--57.

\bibitem{Gru01} Grudsky, S.M. \emph{Toeplitz operators and the modelling of
oscillating discontinuities with the help of Blaschke products}. Operator
theory: Advances and Applications, v. 121, Birkh\"{a}user Verlag, Basel,
2001, pp. 162-193.

\bibitem{Guo09} Guo, Zihua \emph{Global Well-posedness of Korteweg-de Vries
equation in }$H^{-3/4}\left( \mathbb{R}\right) $. J. Math. Pures Appl. (9)
91 (2009), no. 6, 583--97.

\bibitem{Kap1986} Kappeler, T. \emph{Solutions to the Korteweg-de Vries
equation with irregular initial data.} Comm. Partial Diff. Eq. 11 (1986),
927--45. 

\bibitem{KapTop06} Kappeler, T.; Topalov, P. \emph{Global wellposedness of
KdV in $H^{-1}\left( \mathbb{T},\mathbb{R}\right)$.} Duke Math. J. Volume
135, Number 2 (2006), 327--60

\bibitem{KapPerryTopalov2005} Kappeler, T.; Perry, P.; Shubin, M.; Topalov,
P. \emph{The Miura map on the line}. Int. Math. Res. Not. (2005), no. 50,
3091--133. MR2189502 (2006k:37191)

\bibitem{Nik2002} Nikolski, N.K. \emph{Operators, functions, and systems: An
easy reading. Volume 1: Hardy, Hankel and Toeplitz.} Mathematical Surveys
and Monographs, vol. 92, Amer. Math. Soc., Providence, 2002. 461 pp.

\bibitem{Peller2003} Peller, Vladimir V. \emph{Hankel operators and their
applications.} Springer Monographs in Mathematics. Springer-Verlag, New
York, 2003. xvi+784 pp. ISBN: 0-387-95548-8.

\bibitem{Ryb10} Rybkin, Alexei \emph{Meromorphic solutions to the KdV
equation with non-decaying initial data supported on a left half line}.
Nonlinearity 23 (2010), no. 5, 1143--67. 

\bibitem{Ryprep} Rybkin, A. \emph{The Hirota $\tau $-function and
well-posedness of the KdV equation with an arbitrary step like initial
profile decaying on the right half line}, preprint (2010).

\bibitem{SS1999} Savchuk, A.M.; Shkalikov, A.A. \emph{Sturm-Liouville
operators with distribution potentials.} Trans. Moscow Math. Soc. 64 (2003),
143--92.

\bibitem{Tao06} Tao, Terence \emph{Nonlinear dispersive equations}. Local
and global analysis. CBMS Regional Conference Series in Mathematics, 106.
Published for the Conference Board of the Mathematical Sciences, Washington,
DC; by the American Mathematical Society, Providence, RI, 2006. xvi+373 pp.
ISBN: 0-8218-4143-2
\end{thebibliography}
\end{document}